\newtheorem{theorem}{Theorem}
\newtheorem{definition}{Definition}
\def\BibTeX{{\rm B\kern-.05em{\sc i\kern-.025em b}\kern-.08em
    T\kern-.1667em\lower.7ex\hbox{E}\kern-.125emX}}
\begin{document}

\title{VulMCI : Code Splicing-based Pixel-row Oversampling for More Continuous Vulnerability Image Generation\\}

\author{
\IEEEauthorblockN{Tao Peng}
\IEEEauthorblockA{
\textit{School of Computer Science and Artificial Intelligence,} \\
\textit{Wuhan Textile University, Wuhan 430063, China} \\
\textit{Email: pt@wtu.edu.cn}
}
\and
\IEEEauthorblockN{Ling Gui}
\IEEEauthorblockA{
\textit{School of Computer Science and Artificial Intelligence,} \\
\textit{Wuhan Textile University, Wuhan 430063, China} \\
\textit{Email: 2315363137@mail.wtu.edu.cn}
}
\and
\IEEEauthorblockN{Yi Sun}
\IEEEauthorblockA{
\textit{School of Computer Science and Artificial Intelligence,} \\
\textit{Wuhan Textile University, Wuhan 430063, China} \\
\textit{Email: id.yisun@gmail.com}
}
\and
\IEEEauthorblockN{Lijun Cai}
\IEEEauthorblockA{
\textit{College of Computer Science and Electronic Engineering,} \\
\textit{Hunan University, Changsha 410082, China} \\
\textit{Email: ljcai@hnu.edu.cn}
}
\and
\IEEEauthorblockN{Rui Li}
\IEEEauthorblockA{
\textit{College of Software Engineering and Cyber Security,} \\
\textit{Dongguan University of Technology, Dongguan 523000, China} \\
\textit{Email: ruili@dgut.edu.cn}
}
\and
\IEEEauthorblockN{Qiang Zhu}
\IEEEauthorblockA{
\textit{School of Computer Science and Artificial Intelligence,} \\
\textit{Wuhan Textile University, Wuhan 430063, China} \\
\textit{Email: qzhu@wtu.edu.cn}
}
\and
\IEEEauthorblockN{Li Li}
\IEEEauthorblockA{
\textit{School of Computer Science and Artificial Intelligence,} \\
\textit{Wuhan Textile University, Wuhan 430063, China} \\
\textit{Email: lli@wtu.edu.cn}
}
\thanks{Corresponding authors: Yi Sun and Lijun Cai.}
\thanks{\url{https://github.com/guilingxz/VulMCI}}
}

\maketitle

\begin{abstract}
In recent years, the rapid development of deep learning technology has brought new prospects to the field of vulnerability detection. Many vulnerability detection methods transform source code into images for detection, but they neglect the issue of image quality generation. Since vulnerability images do not possess clear and continuous contours like object detection images, Convolutional Neural Networks (CNNs) tend to lose semantic information during the convolution and pooling process. Therefore, this paper proposes a pixel row oversampling method based on code line concatenation to generate more continuous code features, addressing the problem of discontinuous colors in code images. The feasibility of the proposed method is theoretically analyzed and verified. Based on these efforts, we introduce the vulnerability detection system VulMCI, which is tested on two datasets, SARD and NVD. Experimental results demonstrate that VulMCI outperforms eight state-of-the-art vulnerability detectors (i.e., Checkmarx, FlawFinder, RATS, VulDeePecker, SySeVR, Devign, VulCNN, and AMPLE) in accuracy on the SARD dataset. Compared to other image-based methods, VulMCI shows improvements in all metrics, including a 19.82\% increase in True Positive Rate (TPR), a 4.95\% increase in True Negative Rate (TNR), and an 8.89\% increase in accuracy (ACC). On the NVD real-world dataset, VulMCI achieves an average accuracy of 90.41\%.
\end{abstract}

\begin{IEEEkeywords}
Vulnerability Detection, Scurity, Deep Learning, Program Analysis, Program Representation
\end{IEEEkeywords}

\section{Introduction}

With the widespread use of computer networks and the ubiquity of the Internet, vulnerability detection has become increasingly crucial as malicious hackers and cybercriminals continuously seek novel methods to infiltrate systems and pilfer sensitive information. Furthermore, as software scales and complexities continue to grow, vulnerabilities within source code have become more insidious and pervasive, posing a severe threat to information security. According to the latest data from the National Vulnerability Database (NVD)\cite{nvd}, we can observe that in 2021, the number of disclosed vulnerabilities surpassed the twenty-thousand mark, and even though the figure decreased to slightly over thirteen thousand in 2022, it remains a substantial quantity. Consequently, the need for efficient and automated source code vulnerability detection has emerged as an urgent requirement to ensure the robustness and reliability of software systems.

Traditional vulnerability detection methods typically employ techniques  based on code similarity\cite{Jang_Woo_Brumley_2012, Kim_Woo_Lee_Oh_2017, Li_Ernst_2012, Li_Zou_Xu_Jin_Qi_Hu_2016, Pham_Nguyen_Nguyen_Nguyen_2010} or rule-based approaches\cite{FlawFinder_2021 , RoughAuditTool_2021 , Checkmarx_2021} to analyze source code in order to identify potential vulnerabilities. However, these  conventional methods exhibit several notable limitations. Firstly, they  often require a significant amount of manual work and rule formulation,  which becomes impractical when dealing with large-scale code  repositories. Secondly, they frequently struggle to address complex  vulnerability types, especially those with concealment and variants.  Finally, these methods tend to produce false positives and false  negatives when handling extensive code repositories, thus compromising  the accuracy and efficiency of the detection process.

Nevertheless, in recent years, the rapid advancement of deep learning  technology has brought new prospects to the field of vulnerability  detection. Deep learning methods have the ability to automatically learn patterns and features from extensive source code without the need for  manual intervention, rendering them highly scalable. Furthermore, deep  learning technology is effective in handling complex vulnerability  types, thereby enhancing the accuracy of vulnerability detection. These  advantages have positioned deep learning-based vulnerability detection  methods as a current focal point of research and attention.

Existing deep learning-based vulnerability detection systems utilize various methods, including processing code into textual representations using natural language processing\cite{Li_Zou_Xu_Ou_Jin_Wang_Deng_Zhong_2018 , Russell_Kim_Hamilton_Lazovich_Harer_Özdemir_Ellingwood_McConley_2018}, analyzing code structure graphs using graph neural networks\cite{Duan_Wu_Ji_Rui_Luo_Yang_Wu_2019 , Zhou_Liu_Siow_Du_Liu_2019}, and generating images from code for detection\cite{wu2022vulcnn}. While each of these methods has its merits, there is still room for improvement. In some prior studies\cite{Li_Zou_Xu_Ou_Jin_Wang_Deng_Zhong_2018 , Li_Zou_Xu_Jin_Zhu_Chen_2022}, custom Common Weakness Enumeration (CWE) vulnerability sets were used to match and locate vulnerabilities. However, this approach may lead to incomplete vulnerability coverage and instances where vulnerability labels do not match in real-world datasets. Additionally, many methods\cite{Li_Zou_Xu_Jin_Zhu_Chen_2022 , Lv_Peng_Chen_Liu_Hu_He_Jiang_Cao_2023} rely on the concept of code slicing to eliminate redundant information, depending not only on highly accurate vulnerability line localization but also risking the loss of complete program semantic information during slicing.

Furthermore, methods that transform source code into images, such as VulCNN\cite{wu2022vulcnn}, although preserving the complete semantics and structural information of functions, have been found through experiments to generate RGB images with significant instances of black and discontinuous scattered color points in the channels. This characteristic is considered detrimental to the classification task of Convolutional Neural Network (CNN) models. 

In this paper, we adopt the baseline method of converting source code into images and propose improvements to address the issue of discontinuity between rows of pixels in the images. We apply a pixel row oversampling algorithm to the source code lines with dependencies to enhance the continuity between code lines. The main advantage of this approach is to reduce data noise and minimize differences between adjacent data points, thereby making the model more robust to small variations in the input data. This contributes to improving model stability and generalization capability.

In summary, this paper makes the following contributions:

\begin{itemize}
    \item We present a novel framework for vulnerability image detection called VulMCI, which incorporates a method of pixel row oversampling using code control flow. This method generates code feature images with more continuous rows of pixels, effectively enhancing the identification of vulnerability features.

    \item We propose a pixel row oversampling algorithm based on code concatenation. To address the issue of discontinuous numerical values in directly generated code images, we utilize the relationships between nodes in the control flow graph to insert new code lines between adjacent rows. This balances local variations in the image and highlights important features, thereby improving the CNN's ability to extract key features and enhancing classification performance.
s
    \item We present a novel finding that when code is converted into image representation, there exist discontinuous and abrupt numerical changes. This discontinuity may negatively impact the training of Convolutional Neural Network (CNN) models, especially after the images undergo pooling operations, leading to the loss of significant row information.

\end{itemize}

\textbf{Paper organization.} The remaining sections of this paper are organized as follows. Section 2 presents the motivation of our paper. Section 3 introduces our method. Section 4 presents the experimental results. Section 5 discusses future work. Section 6 provides an overview of related work. Finally, Section 7 summarizes the key findings of this paper.

\section{Motivation}
Figure \ref{fig:Vul_img} depicts a vulnerability code image generated using our VulCNN\cite{wu2022vulcnn} method. We observed the presence of numerous redundant zero vectors and discontinuous color points, which are detrimental to the subsequent training of CNN models. This is attributed to the significant differences in feature representation between vulnerability detection tasks and object detection tasks when using images for classification. Object detection tasks typically require models to capture continuous edge features of target objects, which are retained to some extent after multiple convolution and pooling operations, as illustrated in Figure \ref{fig:pool}. However, in vulnerability detection tasks, the features of key statements often occupy a relatively small proportion of the data, and their distribution may be more dispersed. This leads to significant fluctuations in pixel values and the generation of discontinuous color points when directly converting vulnerability code into images. This scenario presents two main challenges for classical convolutional neural networks (CNNs) when processing vulnerability code images:
\begin{enumerate}
    \item \textbf{Information Loss}: Vulnerability code images are likely to lose crucial semantic information after convolution and pooling operations, especially concerning the features of key vulnerability statements. This poses difficulties in training and inference for vulnerability detection models.
    \item \textbf{Interference Noise}: In addition to key statements, vulnerability code images typically contain a large number of irrelevant statements, which may introduce additional interference noise and reduce the model's accurate understanding of vulnerability features.
\end{enumerate}
In response to these challenges, we propose a pixel-row oversampling method based on control flow graph-guided code splicing. By integrating the structured features of the code, we selectively splice correlated code lines to generate new sample rows. This approach aims to bridge code segments with flow relationships, enhancing contextual continuity and thereby improving the extraction of vulnerability features.
\begin{figure}
    \centering
    \includegraphics[width=1\linewidth]{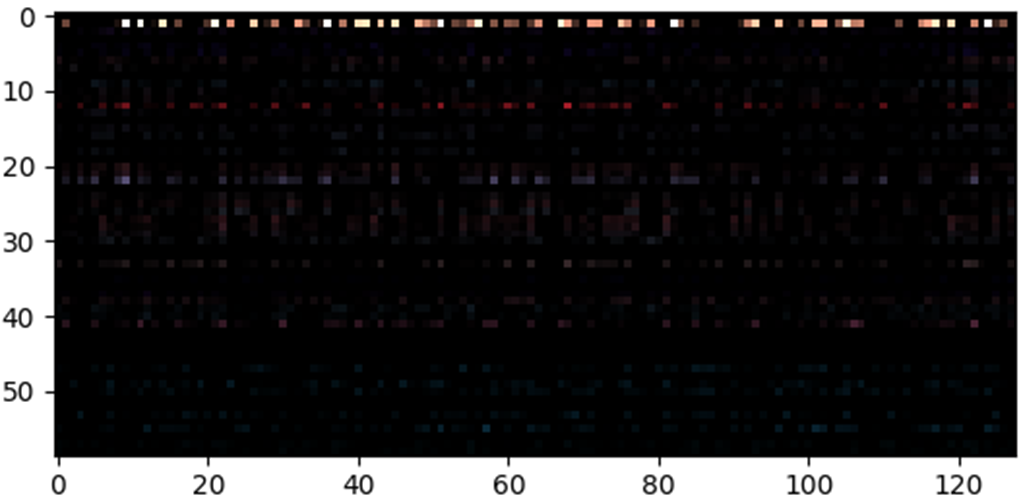}
    \caption{A vulnerability image generated using VulCNN method}
    \label{fig:Vul_img}
\end{figure}

\begin{figure}
    \centering
    \includegraphics[width=1\linewidth]{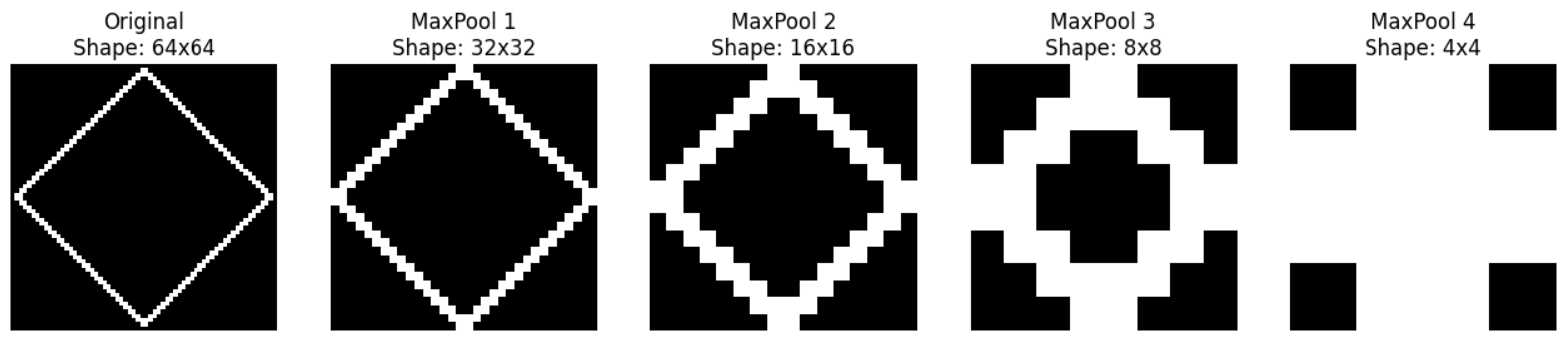}
    \caption{Comparison of Multiple Pooling Results}
    \label{fig:pool}
\end{figure}

\section{Our Method}
Existing methods for generating images from code overlook the continuity of pixel rows, resulting in low-quality images. Therefore, we propose the system approach of VulMCI to generate high-quality images with more continuous rows. As illustrated in Figure \ref{fig:cfg_system}, VulMCI consists of four main stages: constructing function samples, pixel row oversampling, code encoding and embedding, and generating grayscale images.
\begin{figure*}
    \centering
    \includegraphics[width=1\linewidth]{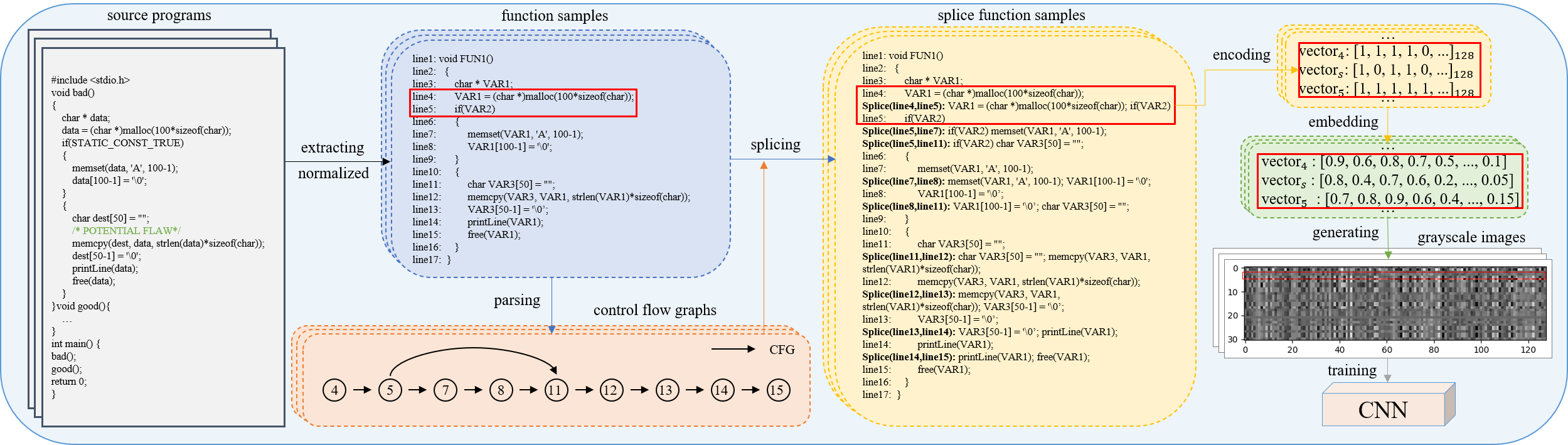}
    \caption{System overview of VulMCI}
    \label{fig:cfg_system}
\end{figure*}
\begin{itemize}
    \item \textbf{Building function samples:} Our system supports the extraction of function-level slice samples from initial program files, which undergo normalization. This process involves three main steps: removing comments, standardizing function names, and standardizing variable names.
    
    \item \textbf{Pixel-row oversampling:} We generate Code Property Graphs (CPGs) from the extracted function samples, and utilize the edge-node relationships of Control Flow Graphs (CFGs) to synthesize new code lines inserted into the source code, where each node represents a line of code.
    
    \item \textbf{Code encoding and embedding:} We utilize the sent2vec\cite{Pagliardini_Gupta_Jaggi_2018} model and parameters provided by VulCNN\cite{wu2022vulcnn} for code encoding and embedding.
    
    \item \textbf{Generating three-channel images:} We arrange the generated code line vectors together to form a grayscale image of vulnerabilities, which is used for subsequent training and classification by CNNs.
\end{itemize}

\subsection{Pixel-row oversampling}\label{AA}
We utilize Joern\cite{Yamaguchi_Golde_Arp_Rieck_2014} to generate Code Property Graphs (CPGs) from the extracted function samples, and then splice them based on the node relationships of Control Flow Graphs (CFGs). The control flow graph describes the control flow of the program, representing the transitions of control flow during program execution. Thus, splicing based on CFG node relationships enhances contextual connections. The splicing process follows these steps: we first traverse the line numbers of the code, and if a node corresponding to the line number has edges of CFG type, we concatenate the code of that node with the code of the target node completely, inserting it as a new code sample line after the node line. If the node corresponding to the code line does not belong to CFG, no splicing is performed, but the code line content is retained. As shown in the red-boxed portion of the function samples in Figure \ref{fig:cfg_system}, lines 4 and 5 correspond to nodes with existing edges in the CFG. Therefore, the codes of lines 4 and 5 are concatenated and inserted between the two lines. Finally, the function samples after pixel row oversampling are vectorized using sent2vec\cite{Pagliardini_Gupta_Jaggi_2018}, where each line of code is embedded as a row vector. These vector rows are then arranged together to generate a grayscale image. Figure\ref{fig:cfg_system} and Algorithm \ref{alg:alg1} describe how VulMCI converts function code into a grayscale image.

\begin{algorithm}
\caption{Converting the source code of a function into an grayscale image}
\label{alg:alg1}
\begin{algorithmic}[1]
\small
    \Require $F$: Source code of a function
    \Ensure $I$: An grayscale image
    \State $nF \gets \Call{CodeNormalization}{F}$
    \State $CFG \gets \Call{GraphExtraction}{nF}$
    \State $cfg\_nodes \gets \Call{ExtractCFGNodes}{CFG}$
    \State $\text{CFG\_adj\_matrix} \gets \Call{GenerateAdjacencyMatrix}{CFG}$
    \State $\text{channel} \gets []$
    \For{$i$ \textbf{in} $\text{range}(\text{len}(code\_list))$}
    \If{$i + 1$ \textbf{not in} $cfg\_nodes$}
        \State $\text{line\_vec} = \Call{EmbedSentence}{\text{code\_list}[i]}$
        \State $\text{channel.insert}(i, \text{line\_vec})$
    \EndIf
    \For{$j$ \textbf{in} $\text{range}(\text{len}(CFG\_adj\_matrix[i]))$}
        \If{$\text{CFG\_adj\_matrix}[i][j] = 1$}
            \State $\text{left\_code} = \text{code\_list}[i]$
            \State $\text{right\_code} = \text{code\_list}[j]$
            \State $\text{concatenated\_code} = \text{left\_code} + \text{right\_code}$
            \State $\text{line\_vec} = \Call{EmbedSentence}{\text{concatenated\_code}}$
            \State $\text{channel.append(line\_vec)}$
        \EndIf
    \EndFor
\EndFor
    \State $I = {Channel}$
\State \Return $I$
\end{algorithmic}
\end{algorithm}

\subsection{Theoretical Analysis}
\begin{definition}
For any two adjacent pixel rows represented by vectors $\text{vector1} = [s_1, s_2, ..., s_n]$ and $\text{vector2} = [f_1, f_2, ..., f_n]$, an image is considered pixel-row continuous when the absolute difference between corresponding elements, denoted as $|s_i - f_i| \leq \text{gap}$, where $\text{gap}$ is a small value. 

\end{definition}

\begin{theorem}
Our method has a good probability of producing more continuous images.
\end{theorem}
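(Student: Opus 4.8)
The plan is to decompose each pixel value into its two constituent factors---the \emph{sent2vec} embedding and the centrality weight---and to control the per-row variation of each factor separately before recombining them. Fix a channel and a pair of adjacent original rows, Line $i$ and Line $i+1$, whose pixel vectors are $\text{vector1}$ and $\text{vector2}$ in the sense of the Definition. After oversampling, the $k-1$ inserted rows partition the transition from Line $i$ to Line $i+1$ into $k$ smaller steps, so it suffices to show that, with good probability, each consecutive step has per-coordinate gap at most the original $|s_j - f_j|$, which is what makes the new image pixel-row continuous whenever the old one was not.

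First I would handle the centrality factor, which is the easy and fully deterministic part. By Equations~\eqref{eq:centrality_D}--\eqref{eq:centrality_C}, the centralities assigned to Line $i$, to the inserted rows $n = 1, \dots, k-1$, and to Line $i+1$ form an arithmetic progression. Hence the difference between the centrality weights of any two consecutive rows is exactly $\tfrac{1}{k}$ of the original difference $|C(\text{line } i) - C(\text{line } i+1)|$; on the centrality side every step is therefore a factor $1/k$ smaller than before.

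The crux is the embedding factor, and this is where the probabilistic statement enters. I would exploit the defining property of \emph{sent2vec}, namely that a sentence embedding is, up to normalization, an average of its constituent token and $n$-gram vectors. Because $\text{concatenated\_code}[n]$ keeps the last $\text{num\_left\_tokens}[n]$ tokens of Line $i$ and the first $\text{num\_right\_tokens}[n]$ tokens of Line $i+1$, and these counts vary monotonically with $n$ by Formulas~\eqref{eq:num_left_tokens}--\eqref{eq:num_right_tokens}, the embedding of the inserted row is approximately a convex combination of $\text{embed}(\text{line } i)$ and $\text{embed}(\text{line } i+1)$ whose mixing weight slides from near $0$ to near $1$. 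Modeling the token vectors as bounded, roughly independent contributions, I would then use a concentration inequality of Hoeffding or McDiarmid type on the per-coordinate deviation from the ideal convex combination to show that, with probability close to $1$, each consecutive embedding gap is within the target $\text{gap}$.

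Finally I would recombine the factors: writing each pixel value as (embedding coordinate)$\times$(centrality) and expanding the difference between consecutive rows by the product rule, the gap splits into a centrality-difference term, an embedding-difference term, and a lower-order cross term that is quadratically small in $1/k$; the first two are controlled by the previous two paragraphs, and a union bound over the $n$ coordinates and the $k$ steps delivers the claim for $k$ large enough relative to the original maximal row difference. I expect the embedding step to be the main obstacle: \emph{sent2vec} is nonlinear, the token selection is discrete, and it is further perturbed by the character-level word completion of Step~3, so turning the ``approximately convex combination'' intuition into a genuine tail bound---rather than the exact arithmetic available for the centralities---is the delicate part of the argument and the sole source of the ``good probability'' qualifier in the statement.
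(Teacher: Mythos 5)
Your proposal is sound as a sketch and takes a genuinely different route from the paper. The paper's own argument ignores the centrality factor entirely and rests on the \emph{training objective} of sent2vec: it writes the two neighbouring rows and the inserted row as products $UVL$ for different word-index lists and then argues qualitatively that, because the contrastive loss pulls context-sharing sentences together and pushes negative samples apart, the learned $U,V$ place the concatenated line between its parents, ``theoretically resulting in a small gap.'' You instead argue from the \emph{compositional structure} of the embedding (a sent2vec sentence vector is an average of its token and $n$-gram vectors, so the inserted row is approximately a convex combination of its neighbours with mixing weight sliding in $n/k$), treat the centrality weights exactly as an arithmetic progression with step $1/k$ of the original difference via Equations~\ref{eq:centrality_D}--\ref{eq:centrality_C}, and recombine the two factors with a product rule and a union bound. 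Your route buys two things the paper lacks: an explicit treatment of the centrality multiplier that actually enters the pixel values of the Definition, and a concrete locus for the ``good probability'' (concentration of the per-coordinate deviation from the ideal convex combination), whereas in the paper the probabilistic qualifier is never attached to any random quantity. The paper's route buys a semantic justification that survives even when the raw token average is a poor proxy for the full-line embedding, since the trained parameters are optimised to collapse context-sharing sentences. The weak point you flag is shared by both: your convex combination involves averages of the selected suffix/prefix tokens rather than the full-line embeddings, and the independence needed for a Hoeffding-type bound is dubious for trained vectors, while the paper never quantifies ``closer in embedding space'' against the threshold $\text{gap}$. Neither sketch closes that step, so yours is no weaker than the paper's and is considerably more explicit about what remains to be proved.
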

\begin{proof}  
\textbf{Step 1:} We demonstrate the continuity between adjacent pixel rows and concatenated pixel rows under specific conditions.
The objective of the sent2vec\cite{Pagliardini_Gupta_Jaggi_2018} model is to learn word and sentence embedding by minimizing a loss function. The simplified representation of the objective is as follows:
\begin{equation}
\min_{U,V} \sum_{S \in C} f_s(UV_{L_s})
\end{equation}
where $U$ represents the word embedding matrix, $V$ represents the sentence embedding matrix, $C$ is the corpus, and $L_s$ represents the word index list corresponding to the sentence $S$ in the corpus.The parameterized representation of the sentence vector is as follows. We aim to obtain a set of $UV$ parameters to demonstrate the effectiveness of our concatenated rows.
\begin{equation}
\text{vector1} = 
\begin{bmatrix} 
u_1\\
u_2\\
\vdots \\ 
\\ 
u_n\\ 
\end{bmatrix}
= UVL_u = \begin{bmatrix}
\sum_{i=1}^{m} u_{i1}w_{i1} \\
\sum_{i=1}^{m} u_{i2}w_{i2} \\
\vdots \\
\sum_{i=1}^{m} u_{in}w_{in} \\
\end{bmatrix}
\times L_u
\end{equation}

\begin{equation}
\text{concatenated\_vector}=
\begin{bmatrix} 
c_1\\
c_2\\
\vdots \\ 
\\ 
c_n\\ 
\end{bmatrix} 
= UVL_c = \begin{bmatrix}
\sum_{i=1}^{m} u_{i1}w_{i1} \\
\sum_{i=1}^{m} u_{i2}w_{i2} \\
\vdots \\
\sum_{i=1}^{m} u_{in}w_{in} \\
\end{bmatrix}
\times L_c
\end{equation}

\begin{equation}
\text{vector2}=
\begin{bmatrix} 
d_1\\
d_2\\
\vdots \\ 
\\ 
d_n\\ 
\end{bmatrix} 
= UVL_d = \begin{bmatrix}
\sum_{i=1}^{m} u_{i1}w_{i1} \\
\sum_{i=1}^{m} u_{i2}w_{i2} \\
\vdots \\
\sum_{i=1}^{m} u_{in}w_{in} \\
\end{bmatrix}
\times L_d
\end{equation}

If for any pair of elements, $|u_i - c_i| \leq \text{gap}$ and $|c_i - d_i| \leq \text{gap}$ holds, then the two vectors and the concatenated vector are considered continuous.\\
\textbf{Step 2:}We establish that the UV parameters in the training objective of the sent2vec model satisfy the requirements of the above inequality.
The complete training objective of the Sent2Vec\cite{Pagliardini_Gupta_Jaggi_2018} model, an unsupervised learning method for learning and inferring sentence embeddings, is as follows:
\begin{multline}
\min_{U,V} \sum_{S \in C} \sum_{w_t \in S} \Bigl( q_p(w_t) \ell \bigl( u_{w_t}^T v_{S_{w_t}} \bigr) \\
+ |N_{w_t}| \sum_{w' \in V} q_n(w') \ell \bigl( -u_{w'}^T v_{S_{w_t}} \bigr) \Bigr)
\end{multline}

where $U$ and $V$ represent the embedding matrices for words and sentences respectively, $C$ is the corpus containing the set of sentences to be learned. Additionally, $S$ represents a specific sentence in the corpus, and $w_t$ is the target word in sentence $S$. $u_{w_t}$ represents the embedding vector for the target word, and $v_{S_{w_t}}$ is the embedding vector for the sentence $S$ with the target word removed. $N_{w_t}$ represents the set of negative samples extracted for the target word $w_t$. The function $\ell(x)$ is the binary logistic regression loss function, measuring the model's predictive performance on positive and negative samples. By minimizing this loss function, the model learns embeddings that effectively capture semantic relationships between words and sentences in an unsupervised manner.

The first part of the loss function focuses on positive samples, encouraging similar sentences to be closer in the embedding space through the logistic regression loss function $\ell \left( u_{w_t}^T v_{S_{w_t}} \right)$. The second part addresses negative samples, pushing dissimilar sentences further apart in the embedding space through the logistic regression loss function $\ell \left( -u_{w'}^T v_{S_{w_t}} \right)$. The overall objective is to minimize the loss function by learning $U$ and $V$, resulting in embeddings with enhanced semantic relationships between sentences and words. By splicing and inserting adjacent code nodes into the middle based on the control flow graph, we introduce more contextual information, making contextually relevant code lines closer in the embedding space. This characteristic is reflected in the code encoding and embedding stage shown in Figure\ref{fig:cfg_system}. Therefore, the training objective of sent2vec\cite{Pagliardini_Gupta_Jaggi_2018} can obtain a set of UV parameters, theoretically leading to a smaller gap, proving the continuity between adjacent vector rows.
\end{proof}

\subsection{Classification}\label{AA}
In the realm of vulnerability detection, the adoption of advanced techniques becomes imperative to handle the complexity and nuances of source code. Convolutional Neural Networks (CNN) have emerged as a powerful tool in various image-based tasks, showcasing their ability to discern intricate features and patterns. In this section, we leverage the capabilities of CNN to address the challenges inherent in source code vulnerability detection.

After completing the image generation phase, we transformed the source code of functions into images. For a given image, we initiated training of a CNN model, subsequently applying this model for vulnerability detection. Since CNN models require input images of uniform size, and the number of code lines in different functions may vary, it becomes essential to select a suitable threshold length for extension or cropping.

The code length distribution of function samples in the experimental dataset (SARD) is illustrated in Figure~\ref{fig:code}. It is observed that the majority of function lengths are less than 100 lines. Therefore, we have chosen a code length threshold of 100 lines to generate our input images. And previous experiments have shown that choosing 100 lines as a threshold can balance overhead and detection performance well. For functions with code lengths less than 100, we pad the vectors with zeros at the end. In the case of functions exceeding 100 lines, we truncate the trailing portion of the vectors. The input image size is set to 100×128, 100 is the code line threshold, and 128 represents the dimensionality of the sentence vectors.
\begin{figure}
    \centering
    \includegraphics[width=1\linewidth]{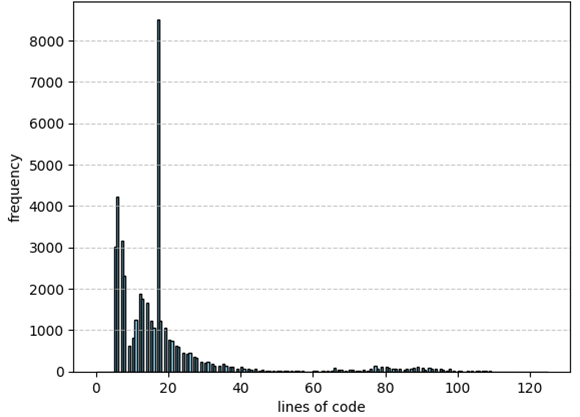}
    \caption{Function code length distribution}
    \label{fig:code}
\end{figure}

After generating images of fixed size, we build a Convolutional Neural Network (CNN) model through training on these images. As depicted in Figure~\ref{fig:CNN}, we employ convolutional filters of varying shapes, with dimensions m * 128, to ensure each filter can extract features across the entire space of the embedded sentences. The size of the filters roughly determines the length of the sentence sequences considered simultaneously. In VulMCI, we select 10 filters of different sizes (from 1 to 10), each containing 32 feature maps to capture features from different parts of the image. Following the max-pooling operation, the length of our fully connected layer is 320. Table~\ref{tab:Parameter} provides a detailed description of the parameters used in VulMCI. The entire model adopts Rectified Linear Unit (ReLU)\cite{Dahl_Sainath_Hinton_2013} as the non-linear activation function. Additionally, we utilize cross-entropy loss as the loss function in CNN for penalizing incorrect classifications. We employ Adam\cite{kingma2014adam} as the optimizer with a learning rate set to 0.001. Once the training is completed, we use the trained CNN model to classify new functions, determining whether they possess vulnerabilities.
\begin{figure}
    \centering
    \includegraphics[width=1\linewidth]{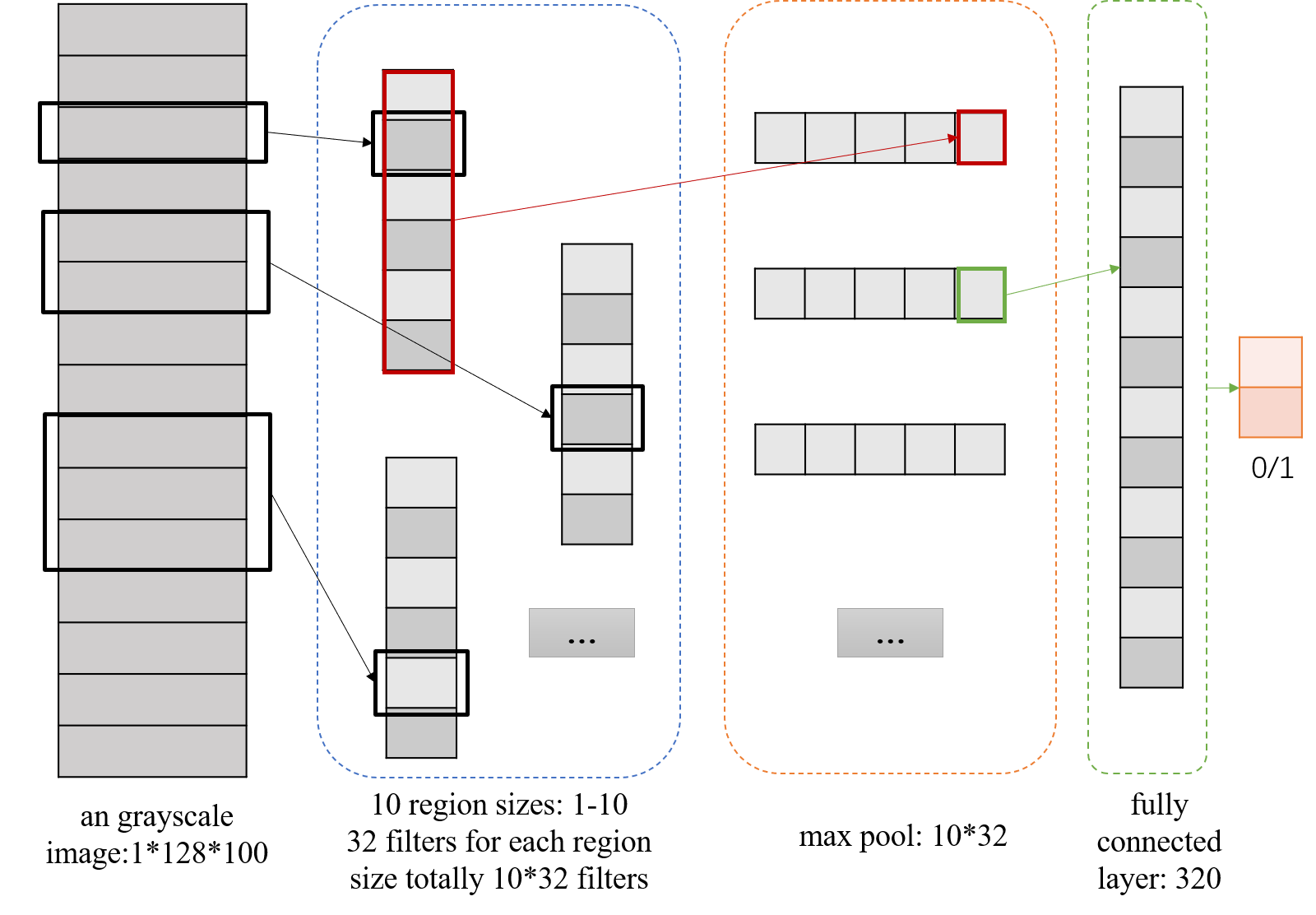}
    \caption{CNN classification of VulMCI}
    \label{fig:CNN}
\end{figure}

\begin{table}[htbp]
  \centering
  \caption{Parameter settings in VulMCI}
  \label{tab:Parameter}
  \begin{tabular}{c|c}
    \hline
    parameters & settings\\
    \hline
    loss function & Cross Entropy Loss \\
    activation function & ReLU\\
    optimizer & Adam\\
    batch size & 32\\
    learning rate & 0.001\\
    epoch num & 100\\
    \hline
  \end{tabular}
\end{table}

\section{EXPERIMENTS}

In this section, our aim is to address the following research questions:
\begin{itemize}
    \item \textbf{RQ1:} What are the advantages of VulMCI's pixel row oversampling method compared to other oversampling methods?
    
    \item \textbf{RQ2:} How does the detection performance of VulMCI compare to other state-of-the-art vulnerability detection systems?
    
    \item \textbf{RQ3:} Can VulMCI be applied to real-world vulnerability scanning?
\end{itemize}

\subsection{Experiment Settings}

The dataset used in this paper is sourced from the vulnerability dataset provided by SySeVR\cite{Li_Zou_Xu_Jin_Zhu_Chen_2022}. This dataset originates from SARD\cite{sard} and NVD\cite{nvd}, encompassing a total of 126 vulnerability types, each uniquely identified by a Common Weakness Enumeration (CWE) ID\cite{cwe}. The SARD dataset comprises production, synthetic, and academic programs (referred to as test cases) categorized as "good" (i.e., without vulnerabilities), "bad" (i.e., containing vulnerabilities), and "mixed" (i.e., vulnerabilities with available patch versions). The NVD dataset includes vulnerabilities in 19 popular C/C++ open-source products (software systems), along with possible diff files describing the differences between susceptible code and its patched versions. The SARD dataset contains a total of 21,233 program files, while the NVD dataset includes 2,011 real vulnerabilities and their corresponding fixed version files. Our system processed the dataset by extracting function samples from program files and removing function samples with less than ten lines of code. This exclusion was based on the observation that the majority of functions with fewer than ten lines only involve external function calls and cannot be easily classified as usable function samples. Ultimately, we obtained 12,116 vulnerability function samples and 4,660 non-vulnerability function samples from the SARD dataset, and 1,049 vulnerability function samples and 952 non-vulnerability function samples from the NVD dataset.

In this study, we first conducted comparative and ablation experiments using the SARD dataset, and then employed the NVD dataset in RQ3 to evaluate the effectiveness of vulnerability detection on real vulnerabilities. We adopted a k-fold cross-validation (k=5) approach to partition the entire dataset into 5 mutually exclusive subsets, with 4 subsets used for model training and 1 subset for model testing. This partitioning approach ensures the utilization of distinct data subsets for both training and testing phases, facilitating the assessment of the model's generalization performance. The devices and experimental parameters employed in this study are presented in Tables~\ref{tab:Parameter} and \ref{tab:devices}, respectively. To comprehensively evaluate the performance of our proposed method, we utilized multiple evaluation metrics, including False Positive Rate (FPR), False Negative Rate (FNR), Precision (Pr), Recall (Re), F1 Score (F1), and Accuracy (ACC). Each metric provides detailed insights into the algorithm's performance across different aspects, offering a nuanced understanding of its strengths and limitations.

\begin{table}[htbp]
\centering
  \caption{Details of the experimental devices}
  \label{tab:devices}
\begin{tabular}{lll}
\hline
Device                   & Type                 & Version  \\ \hline
GPU                      & Tesla                & V100     \\
CPU                      & Intel Xeon Silver    & 6130     \\
Operating system         & Centos linux release & 7.9.2009 \\ \hline
\multirow{6}{*}{Package} & Python               & 3.8      \\
                         & Torch                & 1.12.1   \\
                         & Matplotlib           & 3.8.2    \\
                         & Numpy                & 1.25.2   \\
                         & Networkx             & 3.1      \\
                         & Joern                & 2.0.121  \\ \hline
\end{tabular}
\end{table}

\subsection{Experimental Results}
\subsubsection{Experiments for Answering RQ1}
To validate the effectiveness of the CFG-based pixel row oversampling method in VulMCI, we additionally designed several oversampling methods for comparison. The specific methods are as follows:
\begin{enumerate}
\item To determine the influence of oversampling iterations on the results, we designed a method based on the K-nearest neighbor (KNN) adjacent line code splicing according to the semantic properties predicted by sent2vec. This method adopts segmented combination splicing to generate new lines of code. Given two adjacent lines of code, Line $i$ and Line $i+1$ (where $i$ represents the line number), adaptive splicing is performed between the two adjacent lines of code. The splicing process is executed according to the following steps:
\begin{itemize}
    \item From right to left in Line $i$, we take the number obtained from Equation 1 as the length of the substring, and then obtain the substring $left\_tokens$.
    
    \[    \text{num\_left\_tokens}[i] = \left\lfloor \frac{\text{left\_length} \cdot (k - i)}{k} \right\rfloor
    \]
    
    \item From left to right in Line $i+1$, we take the number obtained from Equation 2 as the length of the substring, and then obtain the substring $right\_tokens$.
    
    \[    \text{num\_right\_tokens}[i] = \left\lfloor \frac{\text{right\_length} \cdot i}{k} \right\rfloor
    \]
    
    \item The new concatenated line is obtained as $concatenated\_code = left\_tokens + right\_tokens$, and we complete the words at the character level.
    
    \item Where $i \in [1, k)$, k-1 new concatenated lines will be gradually inserted between adjacent lines, where k=1 indicates no splicing, k=2 indicates splicing once, k=3 indicates splicing twice, and so on.\end{itemize}

\begin{table*}[h]
\centering
\caption{Parameter analysis of k-valued adjacent code concatenation method}
\label{tab:k}
\begin{tabular}{cccccccc}
\hline
k & Remark & FPR & FNR & Pr & Re & F1 & ACC \\ 
\hline
1 & Baseline without splicing & 17.976 & 8.932 & 92.225 & 93.412 & 91.068 & 88.687 \\
2 & Splicing and inserting 1 line & 16.429 & 8.762 & 92.577 & 93.955 & 91.238 & 89.22 \\
3 & Splicing and inserting 2 lines & 18.69 & 8.252 & 92.476 & 93.215 & 91.748 & 89 \\
4 & Splicing and inserting 3 lines & 19.762 & 8.124 & 92.367 & 92.863 & 91.876 & 88.812 \\
5 & Splicing and inserting 4 lines & 17.738 & 8.72 & 92.381 & 93.508 & 91.28 & 88.906 \\
6 & Splicing and inserting 5 lines & 21.19 & 7.359 & 92.543 & 92.445 & 92.641 & 89 \\
7 & Splicing and inserting 6 lines & 20.476 & 7.954 & 92.341 & 92.637 & 92.046 & 88.75 \\
\hline
VulMCI\_CFG\_k=2 & Splicing only CFG nodes, using k=2 splicing strategy & 1.931 & 3.383 & 97.91 & 99.237 & 96.617 & 97.02 \\ 
\hline
\end{tabular}
\end{table*}
The experimental results, as depicted in Table\ref{tab:k}, encompassed a total of 7 scenarios. From these experiments, it was observed that the best performance was achieved when $k = 2$, indicating that inserting a single line yielded the optimal outcome. This approach resulted in a 0.533\% increase in accuracy compared to not splicing lines. However, the accuracy was still 7.8\% lower compared to VulMCI's CFG-based method. This discrepancy can be attributed to the indiscriminate oversampling of all lines using the K-nearest neighbor splicing method. While this approach enhanced the continuity of the images, it also fused redundant code segments, including those weakly associated with vulnerabilities. As vulnerability-related code segments are in the minority, their significance was not significantly bolstered by the splicing process.

\item To assess the impact of different oversampling methods on the results, we devised three additional control methods, including the complete concatenation of adjacent code and an oversampling approach where code lines are first vectorized and then inserted by adding their respective vectors. The complete concatenation of adjacent code involves inserting a single line between adjacent lines, where concatenated\_code = left\_code + right\_code. The oversampling method of inserting vectors involves the insertion vector being equal to the sum of the left and right vectors. To prevent excessively large values resulting from vector addition, we also experimented with dividing the sum of the vector rows by two before insertion, denoted as Insertion\_vector = (left\_vector + right\_vector) / 2.

\begin{table*}[h]
\centering
\caption{Comparison of other splicing methods}
\label{tab:all}
\begin{tabular}{cccccccc}
\hline
Method            & Description              & FPR & FNR & F1 & Pr & Re & ACC \\ \hline
No splicing       & Baseline                                  & 17.976 & 8.932 & 92.225 & 93.412    & 91.068 & 88.687   \\
All               & Complete code concatenation of adjacent lines & 17.5   & 8.464 & 92.559 & 93.606    & 91.536 & 89.157   \\
Vec               & Vectorize first, then add adjacent vector rows & 19.524 & 8.379 & 92.268 & 92.925    & 91.621 & 88.687   \\
Vec2              & Divide the sum of adjacent vector rows by two before insertion & 22.857 & 7.529 & 92.177 & 91.885    & 92.471 & 88.436   \\ \hline
VulMCI\_CFG\_All & Only splice CFG nodes and adopt the splicing strategy "All" & 1.609  & 3.259 & 98.035 & 99.364    & 96.741 & 97.199   \\ \hline
\end{tabular}
\end{table*}

The experimental results, as shown in Table\ref{tab:all}, indicate that the methods directly processing vector rows (vec and vec2) perform poorly, with vec2 even performing slightly worse than no splicing, suggesting that directly oversampling vector rows may lead to information loss or confusion. The methods of complete code concatenation and adaptive concatenation of 1 line (k=2) yield similar results, hence we compared them with the CFG-based method of adjacent lines with complete concatenation (VulMCI\_CFG\_all), which shows slightly better performance than VulMCI\_CFG\_k=2. This difference in results may be related to the distribution of the concatenated data.

\item As the Code Property Graph (CPG) contains various types of relationship subgraphs, to investigate which type of edge concatenation performs best in experiments, we compared three types of concatenation: Data Dependency Graph (DDG), Control Dependency Graph (CDG), and Control Flow Graph (CFG). Since in the above experiments, the concatenation methods 'k=2' and 'all' performed the best and yielded similar results, we also applied two different concatenation methods to the three types of edges.

\begin{table*}[]
\centering
\caption{Comparison of other splicing methods}
\label{tab:cpg}
\begin{tabular}{cccccccc}
\hline
Method & Description & FPR & FNR & F1 & Pr & Re & ACC \\ 
\hline
No Concatenation & Baseline & 17.976 & 8.932 & 92.225 & 93.412 & 91.068 & 88.687 \\
Concatenate All & Concatenate DDG, CFG, and CDG, complete code & 17.619 & 8.55 & 92.493 & 93.56 & 91.45 & 89.063 \\ 
\hline
DDG\_k=2 & Concatenate only DDG nodes, split-combine approach & 3.004 & 3.053 & 97.876 & 98.823 & 96.947 & 96.961 \\
CDG\_k=2 & Concatenate only CDG nodes, split-combine approach & 15.88 & 11.056 & 91.201 & 93.576 & 88.944 & 87.604 \\
CFG\_k=2 & Concatenate only CFG nodes, split-combine approach & 1.931 & 3.383 & 97.91 & 99.237 & 96.617 & 97.02 \\ 
\hline
DDG\_all & Concatenate only DDG nodes, Complete splice approach & 2.575 & 3.012 & 97.979 & 98.989 & 96.988 & 97.11 \\
CDG\_all & Concatenate only CDG nodes, Complete splice approach & 12.768 & 10.52 & 92.063 & 94.799 & 89.48 & 88.856 \\
CFG\_all & Concatenate only CFG nodes, Complete splice approach & 1.609 & 3.259 & 98.035 & 99.364 & 96.741 & 97.199 \\ 
\hline
\end{tabular}
\end{table*}

The experimental results, as shown in Table 5, indicate that when concatenating based on edge relationships, concatenating complete adjacent lines yields better results than using split-combine concatenation. This is related to the data distribution. When all lines of code are concatenated indiscriminately, the distances between lines are closer, making split-combine concatenation more suitable for this data distribution. The sent2vec model can predict the concatenated lines well by connecting contextual semantics. However, when concatenating based on code structural relationships, the concatenated code is more directional and strongly correlated with the vulnerability execution process, resulting in cases of concatenation across lines. Therefore, concatenating complete adjacent lines based on code structural information is more suitable. Among the three edge types, concatenating based on CFG yields the best results, with an F1 score of 98.035\% and an accuracy of 97.199\%. This is because CFG can better capture code structure and control flow. Specifically, code execution follows temporal logic, and concatenating based on the dependencies of data flow and code control flow does not disrupt the temporal relationship of the code. It also narrows the distance between key statements, facilitating vulnerability feature extraction. Concatenating based on DDG yields the second-best results because DDG only focuses on vulnerabilities caused by data, such as buffer overflow, and does not well reflect the formation relationships of logical vulnerabilities. Concatenating based on CDG yields the worst results, although slightly better than no concatenation. This is because CDG generally only contains conditional branches, such as if and while. Finally, when concatenating all three edge types of DDG, CFG, and CDG, the accuracy is only 0.376\% higher than no concatenation. It is evident that simultaneously concatenating three types of edges leads to many repeated concatenations between nodes, resulting in a loss of distinction between key vulnerability statements, thus having a counterproductive effect.

\end{enumerate}

\subsubsection{Experiments for Answering RQ2}

In this section, we compare VulMCI with several vulnerability detection tools, including a commercial static vulnerability detection tool (i.e., Checkmarx\cite{Checkmarx_2021}), two open-source static analysis tools (i.e., FlawFinde\cite{FlawFinder_2021} and RATS\cite{RoughAuditTool_2021}), and five deep learning-based vulnerability detection methods (i.e., VulDeePecker\cite{Li_Zou_Xu_Ou_Jin_Wang_Deng_Zhong_2018}, SySeVR\cite{Li_Zou_Xu_Jin_Zhu_Chen_2022}, Devign\cite{Zhou_Liu_Siow_Du_Liu_2019}, VulCNN\cite{wu2022vulcnn} and AMPLE\cite{Wen_Chen_Gao_Zhang_M.Zhang_Liao_2023}). Figure~\ref{fig:compare} presents a bar chart that intuitively illustrates the metrics of True Positive Rate (TPR), True Negative Rate (TNR), and Accuracy (ACC) for each of these methods.

\begin{figure}
    \centering
    \includegraphics[width=1\linewidth]{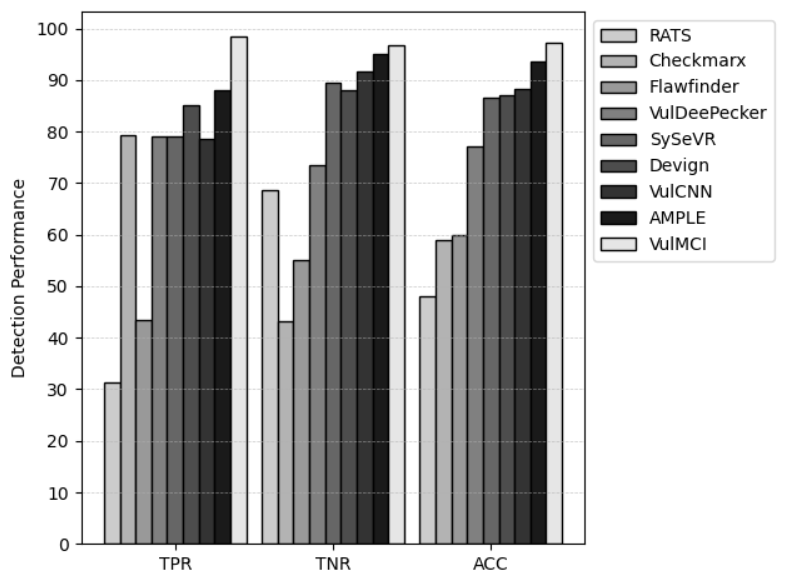}
    \caption{True Positive Rate (TPR), True Negative Rate (TNR), and Accuracy of  RATS, Checkmarx, FlawFinder, VulDeePecker, SySeVR,  Devign, VulCNN and AMPLE on detecting vulnerability}
    \label{fig:compare}
\end{figure}

For the three static analysis tools, RATS, Checkmarx, and FlawFinder, their detection effectiveness is not satisfactory, with accuracy rates all below 60\%. One plausible explanation is that they rely on human experts to define vulnerability rules for detection. However, as vulnerability types become increasingly complex, human experts cannot comprehensively define patterns for all vulnerabilities, leading to higher rates of false positives and false negatives.

Regarding VulDeePecker and SySeVR, both methods employ program slicing to process datasets, vectorizing slices for training bidirectional recurrent neural networks (BRNN) for vulnerability detection. The performance disparity between these two systems arises from VulDeePecker only considering vulnerabilities caused by API function calls and solely focusing on data dependency information. In contrast, SySeVR considers semantic information resulting from both data and control dependencies and supports a wider range of vulnerability syntax labels for matching vulnerabilities. Consequently, SySeVR achieves a 9.47\% higher accuracy than VulDeePecker.

SySeVR utilizes code slicing for vulnerability feature extraction, leading to incomplete code semantics. On the other hand, VulMCI enhances the semantic information of critical code while retaining complete function code. Additionally, the limited coverage of vulnerability label sets in slicing methods means that a function sample may have multiple vulnerability candidates, making precise vulnerability localization and slicing challenging. VulMCI achieves a 10.70\% higher accuracy than SySeVR.

Devign employs a graph neural network approach, effectively leveraging node relationships within the graph structure. However, its use of compound graphs with intricate network node information introduces redundant information weakly correlated with vulnerability nodes, increasing model complexity and performance overhead. Moreover, as the number of nodes increases, the model struggles to learn edge node information, leading to information loss. In contrast, VulMCI utilizes an image-based approach for detection, enhancing vulnerability features through oversampling based on code structure while retaining complete code semantics, thus avoiding the aforementioned issues. 

AMPLE addresses Devign's shortcomings by simplifying nodes according to type and variable, representing different types of edges with weighted vectors, and enhancing node representation using a multi-head attention mechanism. This results in a 6.41\% increase in accuracy compared to Devign. However, due to the addition of modules, AMPLE still incurs considerable performance overhead. In contrast, VulMCI achieves a higher simplification rate by mapping the complex node network to a network between line numbers in the preprocessing phase. Experimental results demonstrate that VulMCI outperforms AMPLE by 3.60\% in accuracy.

VulCNN utilizes centrality analysis to transform time-consuming graph analysis into efficient image scanning, effective for large-scale scanning. However, it overlooks the discontinuity of pixel information in images, leading to insufficient vulnerability feature extraction. VulMCI addresses this issue through pixel row oversampling, resulting in a 19.82\% increase in true positive rate (TPR), a 4.95\% increase in true negative rate (TNR), and an 8.89\% increase in accuracy (ACC).

\subsubsection{Experiments for Answering RQ3}
To evaluate VulMCI's real-world vulnerability detection capabilities, we trained and tested the model using 2011 real vulnerabilities and their corresponding fixed version files from the NVD. The data partitioning method remained consistent with previous experiments. Due to the scarcity of real-world vulnerability data and to prevent inaccuracies caused by individual sample variations, we conducted five repeated experiments, saving the results of the final generation. Each fold of the test set was evaluated in each experiment, and the average result of each fold's testing was computed. The average F1 score over the five repeated experiments reached 89.07\%, with an average accuracy of 90.41\%.

Real-world vulnerability detection often involves large-scale code scanning, necessitating consideration of performance overhead. The time overhead of each system is depicted in Figure 8, with VulCNN exhibiting excellent performance in this regard. We augmented VulCNN with a pixel row oversampling module to reduce the computation of the three central indicators for the PDG and trained it using grayscale images. Meanwhile, VulMCI also selected 100 lines as the function length threshold without increasing model training overhead. Figure \ref{fig:time} illustrates the time overhead proportions of the six models, among which VulMCI's time overhead is similar to VulCNN's, approximately one-sixth of Devign's, and approximately one-fourth of SySeVR's.

\begin{figure}
    \centering
    \includegraphics[width=1\linewidth]{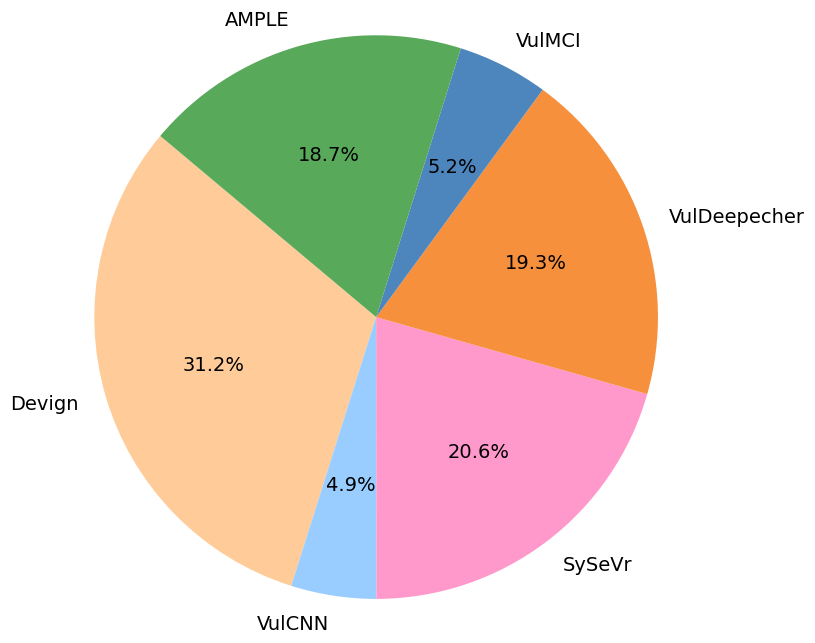}
    \caption{Time Overhead Scale Plot for Models}
    \label{fig:time}
\end{figure}

\section{Discussion}

Throughout the entire system workflow, the static analysis for generating the Code Property Graph (CPG) incurs the highest computational cost. Processing a dataset comprising twenty thousand function samples requires over twenty hours. Additionally, there is some redundancy in the label information of analyzed nodes. We are currently contemplating the reconstruction of a new, lightweight code analysis tool to enhance operational efficiency.

While utilizing the pixel row oversampling method to generate more continuous code line features, we have focused solely on oversampling between code lines. In other words, we have considered only the continuity of pixel rows above and below the code lines, neglecting the continuity within pixel rows. However, we are committed to ongoing research to substantiate the effectiveness of this concept.

In subsequent phases, we plan to extend our method to other graph-based detection systems to further validate its scalability. This extension aims to assess the adaptability and effectiveness of our approach beyond the current system.

\section{RELATED WORK}

In recent years, researchers have proposed various methods for vulnerability detection, categorizing static detection techniques into code similarity-based and pattern-based methods. Code similarity-based approaches, such as \cite{Kim_Woo_Lee_Oh_2017 , Li_Ernst_2012 , Pham_Nguyen_Nguyen_Nguyen_2010 , Sajnani_Saini_Svajlenko_Roy_Lopes_2016}, are generally suitable for identifying code clones or repetitions but exhibit poor performance in detecting unknown vulnerabilities.

Pattern-based methods encompass several directions. Rule-based methods rely on human experts to manually construct vulnerability features. Tools like Checkmarx\cite{Checkmarx_2021}, FlawFinder\cite{FlawFinder_2021}, and RATS\cite{RoughAuditTool_2021} using this approach often struggle to ensure coverage in practical detection, requiring a high level of expertise and thus yielding suboptimal results. Early machine-learning-based vulnerability detection methods, such as those based on metric program representations, typically employed feature engineering to manually extract vulnerability-related features. Classic methods include code churn\cite{perl2015vccfinder,zimmermann2010searching,morrison2015challenges,meneely2013patch,shin2011evaluating,shin2013traditional,walden2014predicting}, code complexity\cite{zimmermann2010searching,moshtari2013using,morrison2015challenges,younis2016fear,shin2011evaluating,shin2013traditional,walden2014predicting}, coverage\cite{zimmermann2010searching,morrison2015challenges}, dependency\cite{zimmermann2010searching,morrison2015challenges}, organizational\cite{zimmermann2010searching,morrison2015challenges}, and developer activity\cite{perl2015vccfinder,bosu2014identifying,meneely2013patch,shin2011evaluating,meneely2010strengthening}. In recent years, researchers have started using deep learning methods to automatically extract vulnerability features.

Due to different program representations at various compilation stages, deep learning-based program representation methods can be categorized into three types based on the organizational form of code representation: sequence-based\cite{grieco2016toward,wu2017vulnerability,russell2018automated,yan2021han,li2021vulnerability,tian2020bvdetector,li2021vuldeelocator}, syntax tree-based\cite{li2019improving,tanwar2021multi,zhang2021isvsf,wang2016automatically,lin2017poster}, and graph-based\cite{duan2019vulsniper,duan2020vulnerability,cao2020ftclnet,li2021acgvd,cheng2021deepwukong,zheng2021vulspg,wang2020combining,okun2013report}. VulDeePecker\cite{Li_Zou_Xu_Ou_Jin_Wang_Deng_Zhong_2018} slices the program, collects code snippets, and transforms them into corresponding vector representations. Ultimately, these vectors are used to train a Bidirectional Long Short-Term Memory (BLSTM) model for vulnerability detection. SySeVR\cite{Li_Zou_Xu_Jin_Zhu_Chen_2022} first performs syntactic analysis and then semantic slicing on vulnerable code, transforming sliced code into fixed-length vectors fed into a Bidirectional Gated Recurrent Unit (BGRU) model for training. VulDeeLocator\cite{li2021vuldeelocator} utilizes intermediate code to define program slices for vulnerability detection, proposing a new variant of Bidirectional Recurrent Neural Network (BRNN), namely BRNN-vdl, for vulnerability detection and localization. Devign\cite{Zhou_Liu_Siow_Du_Liu_2019} is a graph neural network-based model that encodes the original function code into a joint graph structure containing rich program semantics. The gated graph recurrent layer learns node features by aggregating and propagating information in the graph, and the Conv module extracts meaningful node representations for graph-level predictions. AMPLE\cite{Wen_Chen_Gao_Zhang_M.Zhang_Liao_2023} simplified node representations based on Devign and designed an edge-aware graph convolutional network module, addressing the challenge of graph neural networks capturing relationships between distant graph nodes. VulCNN\cite{wu2022vulcnn} is a scalable graph-based vulnerability detection system that effectively converts function source code into images while preserving program semantics, suitable for large-scale vulnerability scanning.

\section{CONCLUSION}

In this paper, we propose and validate that high-quality images with strong continuity can effectively improve the detection accuracy of Convolutional Neural Networks (CNNs). Additionally, we introduce the use of CFG-based pixel row oversampling method, which synthesizes new samples by concatenating between CFG node code, enhancing the continuity of code images. We conduct theoretical analysis and parameter studies, design and compare multiple oversampling methods, as well as some advanced vulnerability detection systems. VulMCI achieves high detection performance and low time overhead in both the SARD and NVD datasets, demonstrating its effectiveness.

\section*{Acknowledgment}

This work was supported by the National Natural Science Foundation of China under Grant 62203337. The authors would like to express their gratitude to the creators of the SySeVR dataset and the open-source code provided by the VulCNN authors.

\bibliographystyle{IEEEtran}
\bibliography{references}

\end{document}